\pdfoutput=1
\documentclass[11pt,letterpaper]{article}
\usepackage[utf8]{inputenc}
\usepackage{amsmath,amssymb,amsthm}
\usepackage{mathpartir}
\usepackage{xcolor}
\usepackage[hyphens]{url}
\usepackage{hyperref}
\usepackage{listings}
\usepackage{microtype}
\usepackage{geometry}
\usepackage{booktabs}
\usepackage{cite}
\newtheorem{theorem}{Theorem}[section]
\newtheorem{lemma}[theorem]{Lemma}
\newtheorem{definition}[theorem]{Definition}
\newtheorem{property}[theorem]{Property}

\lstdefinelanguage{Scheme}{
  morekeywords={define,define-syntax,syntax-rules,lambda,cut,pair,cons,inl,inr,fst,snd,case,force,app,with-continuation,reset,shift,with-continuation-mark,bind,_HALT_},
  sensitive=true,
  morecomment=[l]{;},
  morestring=[b]"
}

\title{\textbf{Polarized Scheme: Duality of Computation, Delimited Control, and Continuation-Passing Macro Interleaving}}
\author{\textbf{Christopher Dutchyn} \\
Department of Computer Science, University of Saskatchewan \\
\texttt{cdutchyn@cs.usask.ca}}
\date{}

\begin{document}

\maketitle

\begin{abstract}
Static type systems for Scheme have traditionally followed one of two paths: manifest type checking with multi-dispatch (Clifton and Leavens, 2005) or latent occurrence typing tailored for untyped interoperation (Tobin-Hochstadt and Felleisen, 2008). In this paper, we propose \emph{Polarized Scheme}, a static foundation founded upon polarized type theory and the duality of computation. In Polarized Scheme, values are positive data producers ($A^+$) and continuations are negative observation consumers ($\sim A^+$). Functions operate symmetrically under dual viewpoints: as forward value-transformers and as backward continuation-transformers. Computation exists exclusively at the interaction cut $\langle P \mid K \rangle$ executed on a focus-and-context abstract machine separating positive data focus from negative return contexts ($\mathcal{R}$). Delimited control (\texttt{shift}/\texttt{reset}) and continuation marks are expressed directly as Return Stack frame operations parameterized by the prompt's ambient answer type without polluting producer data types. Furthermore, we formalize the hygienic macro system as an interleaved continuation-passing pipeline, where macro expansion erects its own continuation hierarchy nesting into runtime evaluation. We prove Subject Reduction, Progress with respect to prompt-delimited stacks, and Duality Bisimilarity, characterize Return Stack reclamation for dynamic closures, and report on an executable reference implementation in VeGo with a verified cut-elimination dispatcher.
\end{abstract}

\section{Introduction}

The design of static type systems for dynamic languages like Scheme has historically focused on accommodating untyped idioms. Clifton and Leavens~\cite{clifton2005scheme} introduced multi-dispatch annotations, while Tobin-Hochstadt and Felleisen~\cite{tobinhochstadt2008typed} established Typed Racket based on occurrence typing and latent predicate refinement. 

However, Scheme possesses a deeper foundational characteristic: \emph{first-class control}. Continuations in Scheme are first-class entities that reify evaluation contexts into values. In standard type systems, continuations are forced into the value category ($(\text{cont}\ \tau)$ or $\tau \to \bot$), obscuring the fundamental asymmetry between data that is produced and computations that consume.

Drawing from polarized linear logic~\cite{laurent2002polarized}, Call-by-Push-Value (CBPV)~\cite{levy2004cbpv}, and the classical duality of computation~\cite{curien2000duality,downen2020duality}, we formulate \textbf{Polarized Scheme}. Our system establishes a strict structural divide:
\begin{enumerate}
    \item \textbf{Positive Data Producers ($A^+$)}: Discrete values, constructors (units, booleans, integers, pairs, sum injections, and suspended thunks) that construct data.
    \item \textbf{Negative Observation Consumers ($N^-$)}: Evaluation contexts, continuations, and elimination destructors ($\sim A^+$, projections, branching cases, function applications, thunk force) that consume data.
    \item \textbf{Dual Function Viewpoints}: Functions $\downarrow(A^+ \to \sim B^+)$ act simultaneously as forward value-transformers (consuming an input value to produce an answer) and backward continuation-transformers (transforming a continuation awaiting $B^+$ into one awaiting $A^+$).
    \item \textbf{The Focus-and-Context Abstract Machine}: Execution is formalized as polarized cut elimination on two distinct structures: an active producer focus $P$ and a Return Stack $\mathcal{R}$ of negative consumers.
    \item \textbf{Interleaved Continuation-Passing Macros}: Rather than an isolated pre-processing pass, macro expansion is an interleaved pipeline where macro invocations erect their own continuation hierarchy that nests into evaluation.
\end{enumerate}

\section{Syntax and Polarized Types}

\subsection{Type Syntax}
Types are strictly partitioned into positive types $\mathcal{T}^+$ and negative types $\mathcal{T}^-$:
\begin{align*}
\text{Positive Types } A^+, B^+ &::= \text{unit} \mid \text{bool} \mid \text{int} \mid A^+_1 \times A^+_2 \mid A^+_1 + A^+_2 \mid \downarrow A^+ \\
\text{Negative Types } N^-, M^- &::= \sim A^+ \mid A^+ \to N^-
\end{align*}
The shift type $\downarrow A^+$ denotes a suspended computation (thunk) reified into a positive value that yields answer $A^+$ upon elimination. Conversely, $\sim A^+$ represents a continuation consumer expecting a positive value of type $A^+$.

\subsection{Term Syntax}
The term language consists of producers $P$, consumers $K$, and interaction cuts $c$:
\begin{align*}
\text{Producers } P &::= () \mid b \mid n \mid x \mid (P_1, P_2) \mid \text{inl}(P) \mid \text{inr}(P) \mid \text{thunk}(c) \mid \lambda(x : A^+ \cdot \alpha : \sim B^+).\, c \\
\text{Consumers } K &::= \alpha \mid \text{halt} \mid \text{fst} \cdot K \mid \text{snd} \cdot K \mid \text{case}(K_1, K_2) \mid \text{force} \cdot K \mid \text{app}(P, K) \\
                    &\quad \mid \text{with-cont}(f, K) \mid \text{prompt}(c) \cdot K \mid \text{mark}(k, P) \cdot K \mid \text{shift}(\alpha).\, c \mid \text{bind}(x) \cdot K \\
\text{Cuts } c &::= \langle P \mid K \rangle
\end{align*}

In Scheme s-expression concrete syntax:
\begin{itemize}
    \item An interaction cut is written $\langle P \mid K \rangle$ or \texttt{(cut P K)}.
    \item Forward value application is written \texttt{(app arg k)}.
    \item Backward continuation transformation is written \texttt{(with-continuation f k)} or \texttt{(<-: f k)}.
    \item A function abstraction is written \texttt{(lambda ([x : In] \textasciitilde k) cut)}.
    \item Thunk suspension and forcing are written \texttt{(thunk cut)} and \texttt{(force k)}.
    \item Delimited control prompts and continuation marks are written \texttt{(reset cut)}, \texttt{(shift k cut)}, and \texttt{(with-continuation-mark key val cut)}.
\end{itemize}

\section{Static Semantics}

The type system maintains two typing contexts: $\Gamma$ mapping value variables to positive types ($x : A^+$) and $\Delta$ mapping continuation co-variables to negative consumer types ($\alpha : \sim A^+$).

\subsection{Typing Judgments}
The typing system comprises three primary judgments:
\begin{itemize}
    \item $\Gamma \vdash P : A^+$: Producer $P$ synthesizes positive type $A^+$.
    \item $\Gamma \mid \Delta \vdash K : \sim A^+$: Consumer $K$ expects positive input of type $A^+$.
    \item $\Gamma \mid \Delta \vdash \langle P \mid K \rangle \div A^+$: Interaction cut is well-typed and delivers observable answer $A^+$.
\end{itemize}

\subsection{Inference Rules}

\begin{mathpar}
\inferrule[T-Unit]
  { }
  {\Gamma \vdash () : \text{unit}}

\inferrule[T-Int]
  { }
  {\Gamma \vdash n : \text{int}}

\inferrule[T-Var]
  {\Gamma(x) = A^+}
  {\Gamma \vdash x : A^+}

\inferrule[T-Pair]
  {\Gamma \vdash P_1 : A_1^+ \\ \Gamma \vdash P_2 : A_2^+}
  {\Gamma \vdash (P_1, P_2) : A_1^+ \times A_2^+}

\inferrule[T-Inl]
  {\Gamma \vdash P : A_1^+}
  {\Gamma \vdash \text{inl}(P) : A_1^+ + A_2^+}

\inferrule[T-Inr]
  {\Gamma \vdash P : A_2^+}
  {\Gamma \vdash \text{inr}(P) : A_1^+ + A_2^+}

\inferrule[T-Thunk]
  {\Gamma \mid \Delta \vdash c \div A^+}
  {\Gamma \vdash \text{thunk}(c) : \downarrow A^+}

\inferrule[T-Lambda]
  {\Gamma, x : A^+ \mid \Delta, \alpha : \sim B^+ \vdash c \div B^+}
  {\Gamma \vdash \lambda(x : A^+ \cdot \alpha : \sim B^+).\, c : \downarrow (A^+ \to \sim B^+)}

\inferrule[T-Halt]
  { }
  {\Gamma \mid \Delta \vdash \text{halt} : \sim A^+}

\inferrule[T-CoVar]
  {\Delta(\alpha) = \sim A^+}
  {\Gamma \mid \Delta \vdash \alpha : \sim A^+}

\inferrule[T-Fst]
  {\Gamma \mid \Delta \vdash K : \sim A_1^+}
  {\Gamma \mid \Delta \vdash \text{fst} \cdot K : \sim (A_1^+ \times A_2^+)}

\inferrule[T-Snd]
  {\Gamma \mid \Delta \vdash K : \sim A_2^+}
  {\Gamma \mid \Delta \vdash \text{snd} \cdot K : \sim (A_1^+ \times A_2^+)}

\inferrule[T-Case]
  {\Gamma \mid \Delta \vdash K_1 : \sim A_1^+ \\ \Gamma \mid \Delta \vdash K_2 : \sim A_2^+}
  {\Gamma \mid \Delta \vdash \text{case}(K_1, K_2) : \sim (A_1^+ + A_2^+)}

\inferrule[T-Force]
  {\Gamma \mid \Delta \vdash K : \sim A^+}
  {\Gamma \mid \Delta \vdash \text{force} \cdot K : \sim \downarrow A^+}

\inferrule[T-App]
  {\Gamma \vdash P : A^+ \\ \Gamma \mid \Delta \vdash K : \sim B^+}
  {\Gamma \mid \Delta \vdash \text{app}(P, K) : \sim \downarrow (A^+ \to \sim B^+)}

\inferrule[T-WithCont]
  {\Gamma \vdash f : \downarrow (A^+ \to \sim B^+) \\ \Gamma \mid \Delta \vdash K : \sim B^+}
  {\Gamma \mid \Delta \vdash \text{with-cont}(f, K) : \sim A^+}

\inferrule[T-Prompt]
  {\Gamma \mid \Delta \vdash c \div A^+ \\ \Gamma \mid \Delta \vdash K : \sim A^+}
  {\Gamma \mid \Delta \vdash \text{prompt}(c) \cdot K : \sim A^+}

\inferrule[T-Shift]
  {\Gamma, \alpha : \downarrow (X^+ \to \sim A^+) \mid \Delta \vdash c \div A^+}
  {\Gamma \mid \Delta \vdash \text{shift}(\alpha).\, c : \sim X^+}

\inferrule[T-Mark]
  {\Gamma \vdash P_{\text{val}} : B^+ \\ \Gamma \mid \Delta \vdash K : \sim A^+}
  {\Gamma \mid \Delta \vdash \text{mark}(k, P_{\text{val}}) \cdot K : \sim A^+}

\inferrule[T-Cut]
  {\Gamma \vdash P : A^+ \\ \Gamma \mid \Delta \vdash K : \sim A^+}
  {\Gamma \mid \Delta \vdash \langle P \mid K \rangle \div A^+}
\end{mathpar}

\paragraph{Harmonized Duality of Application.}
Observe the complete structural duality between \textsc{T-App} and \textsc{T-WithCont}:
\begin{itemize}
    \item \textsc{T-App} views function $f$ as a producer. It builds consumer $\text{app}(P, K)$ to apply argument $P : A^+$ and continue with $K : \sim B^+$.
    \item \textsc{T-WithCont} treats function $f$ as a continuation transformer, constructing a consumer $\text{with-cont}(f, K)$ of type $\sim A^+$ to receive $A^+$ and pass the result via $K : \sim B^+$.
\end{itemize}

\paragraph{Typing Delimited Control.}
Rules \textsc{T-Prompt} and \textsc{T-Shift} parameterize delimited control directly via the ambient answer type $A^+$:
\begin{itemize}
    \item \textsc{T-Prompt} establishes an inner answer domain $A^+$ for cut $c$. When $c$ terminates, its result is forwarded to outer consumer $K : \sim A^+$.
    \item \textsc{T-Shift} captures the delimited continuation slice expecting input $X^+$ and returning answer $A^+$. The reified continuation $\alpha$ receives type $\downarrow(X^+ \to \sim A^+)$, and the replacement cut $c$ evaluates directly in place of the prompt with answer type $A^+$.
\end{itemize}

\section{Dynamic Semantics: The Focus-and-Return-Stack Machine}

The operational semantics is defined on a focus-and-context abstract machine with configurations:
\[
C ::= \langle P \mid K \rangle_\rho
\]
where $P$ is the active producer focus, $K$ is the top consumer, and $\rho$ is the lexical environment mapping variables $x \mapsto P$ and co-variables $\alpha \mapsto K$. When consumers are composed sequentially, they form the **Return Stack** $\mathcal{R} = K_1 :: K_2 :: \dots :: \text{halt}$.

\subsection{Transition Rules}

The machine transitions $\langle P \mid K \rangle_\rho \longrightarrow \langle P' \mid K' \rangle_{\rho'}$ by resolving the interaction cut:

\begin{align*}
\text{\textbf{[Red-Pair-Fst]}} \quad & \langle (P_1, P_2) \mid \text{fst} \cdot K \rangle_\rho \longrightarrow \langle P_1 \mid K \rangle_\rho \\
\text{\textbf{[Red-Pair-Snd]}} \quad & \langle (P_1, P_2) \mid \text{snd} \cdot K \rangle_\rho \longrightarrow \langle P_2 \mid K \rangle_\rho \\
\text{\textbf{[Red-Sum-Inl]}} \quad & \langle \text{inl}(P) \mid \text{case}(K_1, K_2) \rangle_\rho \longrightarrow \langle P \mid K_1 \rangle_\rho \\
\text{\textbf{[Red-Sum-Inr]}} \quad & \langle \text{inr}(P) \mid \text{case}(K_1, K_2) \rangle_\rho \longrightarrow \langle P \mid K_2 \rangle_\rho \\
\text{\textbf{[Red-Thunk-Force]}} \quad & \langle \text{thunk}(c) \mid \text{force} \cdot K \rangle_\rho \longrightarrow \langle P_c \mid K_c \cdot K \rangle_\rho \quad \text{where } c = \langle P_c \mid K_c \rangle \\
\text{\textbf{[Red-App]}} \quad & \langle \lambda(x \cdot \alpha).\, c \mid \text{app}(P, K) \rangle_\rho \longrightarrow \langle P_c \mid K_c \rangle_{\rho_{\text{call}}} \\
& \quad \text{where } c = \langle P_c \mid K_c \rangle \text{ and } \rho_{\text{call}} = \rho_{\text{lam}}[x \mapsto P, \alpha \mapsto K] \\
\text{\textbf{[Red-WithCont]}} \quad & \langle P \mid \text{with-cont}(f, K) \rangle_\rho \longrightarrow \langle f \mid \text{app}(P, K) \rangle_\rho \\
\text{\textbf{[Red-Bind]}} \quad & \langle P \mid \text{bind}(x) \cdot K \rangle_\rho \longrightarrow \langle P \mid K \rangle_{\rho[x \mapsto P]} \\
\text{\textbf{[Red-Halt]}} \quad & \langle v \mid \text{halt} \rangle_\rho \longrightarrow v \quad (\text{Normal Form Terminal})
\end{align*}

\section{Delimited Control and Continuation Marks}

Unlike classic Scheme where \texttt{call/cc} captures the entire unbounded continuation, Polarized Scheme models delimited control via Return Stack slicing.

\subsection{Prompts and Shift}
A delimited prompt $\text{prompt}(c) \cdot K$ pushes a prompt delimiter frame onto $\mathcal{R}$ before evaluating the enclosed cut $c$. When an inner computation invokes $\text{shift}(\alpha).\, c'$, the machine searches down the Return Stack $\mathcal{R}$ to the nearest prompt frame:
\[
\mathcal{R} = K_1 :: K_2 :: \dots :: K_n :: \text{prompt}(c) \cdot K'
\]
The slice $K_{\text{slice}} = [K_1, \dots, K_n]$ is captured and reified as a composite function:
\[
\text{reify}(K_{\text{slice}}) = \lambda(v \cdot k_{\text{ret}}).\, \langle v \mid K_1 \cdot K_2 \dots \cdot K_n \cdot k_{\text{ret}} \rangle
\]
The Return Stack is pruned directly to $K'$, and the body cut $c'$ is executed with $\alpha$ bound to $\text{reify}(K_{\text{slice}})$.

\subsection{Continuation Marks}
A mark frame $\text{mark}(key, val) \cdot K$ attaches dynamic metadata to the current Return Stack frame. Unwinding or capturing the stack retains all active marks, allowing efficient dynamic binding and diagnostic tracing without modifying positive data types.

\section{Interleaved Continuation-Passing Macro Pipeline}

A central challenge in Scheme is that macro definitions can be computed and defined dynamically to apply to subsequent expressions in the same compilation stream. In Polarized Scheme, rather than relying on an unprincipled pre-processing pass, the macro engine is structured as an interleaved continuation-passing pipeline.

\subsection{Syntax and Macro Representation}
Surface syntax $s \in \mathcal{S}$ is partitioned into core forms and syntactic abstractions:
\[
s ::= \langle P \mid K \rangle \mid (\text{define}\ [x : A^+]\ P) \mid (\text{define}\ (f\ [x : A^+]\ \sim \alpha)\ c) \mid (\text{define-syntax}\ m\ M) \mid (m\ s_1 \dots s_n)
\]
A macro transformer $M$ is defined via pattern-matching rewrite rules:
\[
M ::= \text{syntax-rules}\ (k_1 \dots k_p)\ ((\text{pat}_1 \Rightarrow \text{tmpl}_1) \dots (\text{pat}_m \Rightarrow \text{tmpl}_m))
\]
A macro environment $\mathcal{E}$ maps macro keywords $m$ to their transformers $M$.

\subsection{Expansion Judgments and Continuation-Passing Expansion}
Macro expansion is formalized as a relation $\mathcal{E} \vdash s \leadsto c$ translating surface form $s$ into a core interaction cut $c$. For compound expressions, expansion proceeds in \emph{continuation-passing macro style} (CPMS): when expanding sub-expressions, the expander erects an expansion continuation $K_{\text{macro}}$ that receives expanded sub-terms and synthesizes the enclosing interaction cut:
\begin{mathpar}
\inferrule[M-DefVar]
  {\mathcal{E} \vdash P \leadsto P'}
  {\mathcal{E} \vdash (\text{define}\ [x : A^+]\ P) \leadsto \langle P' \mid \text{bind}(x) \cdot \text{halt} \rangle}

\inferrule[M-DefFun]
  {\mathcal{E} \vdash c \leadsto c'}
  {\mathcal{E} \vdash (\text{define}\ (f\ [x : A^+]\ \sim \alpha) : B^+\ c) \leadsto \langle \lambda(x : A^+ \cdot \alpha : \sim B^+).\, c' \mid \text{bind}(f) \cdot \text{halt} \rangle}

\inferrule[M-App]
  {\mathcal{E}(m) = (\dots (\text{pat}_i \Rightarrow \text{tmpl}_i) \dots) \\ \sigma = \text{match}(\text{pat}_i, (s_1 \dots s_n)) \\ \mathcal{E} \vdash \sigma(\text{tmpl}_i) \leadsto c}
  {\mathcal{E} \vdash (m\ s_1 \dots s_n) \leadsto c}
\end{mathpar}

\subsection{Stream Evaluation and Phase Soundness}
A Polarized Scheme program is an open stream of forms $S = [s_1, s_2, \dots]$. Execution threads a compilation-and-evaluation state $\Sigma = \langle \mathcal{E}, \Gamma, \rho \rangle$:
\[
\langle \mathcal{E}_i, \Gamma_i, \rho_i \rangle \xrightarrow{s_i} \langle \mathcal{E}_{i+1}, \Gamma_{i+1}, \rho_{i+1} \rangle
\]
defined by three interleaved transitions:
\begin{enumerate}
    \item \textbf{Syntax Definition}: If $s_i = (\text{define-syntax}\ m\ M)$, the macro table is updated immediately for subsequent forms:
    \[
    \mathcal{E}_{i+1} = \mathcal{E}_i \cup \{ m \mapsto M \}, \quad \Gamma_{i+1} = \Gamma_i, \quad \rho_{i+1} = \rho_i
    \]
    \item \textbf{Top-Level Binding}: If $\mathcal{E}_i \vdash s_i \leadsto \langle P \mid \text{bind}(x) \cdot \text{halt} \rangle$, the cut is type-checked against $\Gamma_i$, yielding $\Gamma_i \vdash P : A^+$. The machine evaluates $\langle P \mid \text{bind}(x) \cdot \text{halt} \rangle_{\rho_i} \longrightarrow^* v$. Then:
    \[
    \mathcal{E}_{i+1} = \mathcal{E}_i, \quad \Gamma_{i+1} = \Gamma_i[x \mapsto A^+], \quad \rho_{i+1} = \rho_i[x \mapsto v]
    \]
    \item \textbf{Expression Cut}: If $\mathcal{E}_i \vdash s_i \leadsto c$, $c$ is type-checked ($\Gamma_i \mid \emptyset \vdash c \div A^+$) and evaluated on the abstract machine to terminal value $v$, preserving $\Sigma_{i+1} = \Sigma_i$.
\end{enumerate}

\begin{theorem}[Phase Soundness]
Let $\Sigma_0 = \langle \mathcal{E}_0, \emptyset, \emptyset \rangle$ be a well-formed initial state. For any stream step $\Sigma_i \xrightarrow{s_i} \Sigma_{i+1}$, the resulting environment $\rho_{i+1}$ is consistent with $\Gamma_{i+1}$ ($\rho_{i+1} \models \Gamma_{i+1} ; \emptyset$), and all residual cuts $c_i$ are closed and well-typed.
\end{theorem}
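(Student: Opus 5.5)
We argue by induction on the stream index $i$, with the invariant $\mathcal{I}(\Sigma_i)$: $\rho_i \models \Gamma_i ; \emptyset$, meaning $\mathrm{dom}(\rho_i) = \mathrm{dom}(\Gamma_i)$ and, for each $x$ with $\Gamma_i(x) = A^+$, the value $\rho_i(x)$ is a closed terminal producer with $\emptyset \vdash \rho_i(x) : A^+$. We also require every macro in $\mathcal{E}_i$ to be a well-formed \texttt{syntax-rules} transformer. The base case holds trivially: $\rho_0 = \Gamma_0 = \emptyset$, and $\mathcal{E}_0$ is well-formed by hypothesis. The inductive step is a case split on the three stream transitions.

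\textbf{Syntax Definition.} Here $\Gamma$ and $\rho$ are unchanged, so consistency is inherited. Well-formedness of $\mathcal{E}_{i+1}$ needs only that $M$ is a syntactically valid transformer, which we take as part of accepting the form. No residual cut is produced.

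\textbf{Expression Cut.} The state is unchanged, so the invariant is preserved. The residual cut $c_i$ satisfies $\Gamma_i \mid \emptyset \vdash c_i \div A^+$ by the type-checking premise. Closedness comes from two facts. First, no co-variables are free, since $\Delta = \emptyset$ and \textsc{T-CoVar} cannot fire. Second, every free value variable lies in $\mathrm{dom}(\Gamma_i) = \mathrm{dom}(\rho_i)$. Substituting $\rho_i$, which consists of closed well-typed values, closes $c_i$ while preserving its type. This step uses a standard substitution lemma, proved by induction on typing derivations over the rules of Section~3.

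\textbf{Top-Level Binding.} Checking gives $\Gamma_i \vdash P : A^+$. By \textsc{T-Bind}-style reasoning (or directly by \textsc{T-Cut} together with \textsc{T-Halt}), the cut $\langle P \mid \text{bind}(x) \cdot \text{halt} \rangle$ is closed under $\rho_i$ and well-typed. The machine steps to $\langle P \mid \text{halt} \rangle_{\rho_i[x \mapsto P]}$ and then to $v$. To obtain $\emptyset \vdash v : A^+$, invoke Subject Reduction (stated in the paper) along the reduction sequence; [Red-Bind] only extends the environment, and [Red-Halt] returns the focus unchanged. Then $\rho_{i+1} = \rho_i[x \mapsto v]$ and $\Gamma_{i+1} = \Gamma_i[x \mapsto A^+]$. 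Shadowing does not break consistency, because the previous entry for $x$ is overwritten in both maps simultaneously.

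\textbf{Main obstacle.} The hardest point is ensuring that expansion does not break well-scopedness before type-checking. Expansion via \textsc{M-App} can introduce identifiers from templates, and a variable that is not captured hygienically could be free yet outside $\mathrm{dom}(\Gamma_i)$. The type-checking premise settles this: any ill-scoped variable fails \textsc{T-Var}, so a successful check already certifies scope, and expansion needs no separate hygiene invariant. A second subtlety is that the environment $\rho$ in the machine holds producers, which may be lambdas closing over $\rho_{\text{lam}}$. For these we strengthen the invariant: a closure value is typed by \textsc{T-Lambda} relative to an environment that itself satisfies $\models$. This makes the consistency relation a logical relation defined mutually on values and environments, and we close it under the same induction.
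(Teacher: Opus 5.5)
The paper states Phase Soundness without proof, so there is no argument to compare yours with. Judged on its own, your proposal has a genuine gap in the Top-Level Binding case, and it propagates to the other cases.

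\textbf{The stored value is not closed.} You claim $\emptyset \vdash v : A^+$. But by your own reading, \textbf{[Red-Bind]} stores the focus and \textbf{[Red-Halt]} returns it unchanged, so $v = P$. That term is typed only under $\Gamma_i$ and may mention earlier top-level variables, and the machine has no lookup or substitution step that closes it. (If you instead restrict $v$ to closed values, the binding step simply does not exist whenever $P$ uses an earlier definition.)

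Subject Reduction cannot supply closedness, for three reasons. Its conclusion is existential in $\Gamma', \Delta'$. It does not cover \textbf{[Red-Halt]}, which yields a value rather than a configuration. And it cannot even be started, because no rule types $\text{bind}(x) \cdot K$: \textsc{T-Cut} with \textsc{T-Halt} types $\langle P \mid \text{halt} \rangle$, not the bind cut.

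\textbf{Consequences for shadowing and the expression-cut case.} With open values, your shadowing remark is exactly where consistency breaks, even for the paper's weaker relation. From $\Gamma_i = \{x : \text{int}\}$ and $s_i = (\text{define}\ [x : \text{int} \times \text{int}]\ (x, x))$ you get $\rho_{i+1}(x) = (x, x)$. Its type under $\Gamma_{i+1}$ is $(\text{int} \times \text{int}) \times (\text{int} \times \text{int})$, not the recorded $\text{int} \times \text{int}$. Likewise, a stored function mentioning a top-level $y$ stops checking at its recorded type once $y$ is redefined at another type. So overwriting ``both maps simultaneously'' does not protect the other entries. Your expression-cut case substitutes $\rho_i$ on the assumption that its values are closed, so it inherits the same hole.

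\textbf{What a repair requires.} The repair must change what is stored. One option is $\rho_{i+1}(x) = P[\rho_i]$; then $\emptyset \vdash P[\rho_i] : A^+$ follows from multi-substitution, and shadowing becomes harmless. The other is to store closures $(P, \rho_i)$ and define $\models$ as a logical relation on closures for every producer form, not only lambdas as your last paragraph suggests. Either way this amends the machine or the definition of $\models$. You have to make that amendment explicitly rather than attribute it to Subject Reduction.

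\textbf{Free co-variables.} Your closedness argument also fails for co-variables. \textsc{T-Thunk} and \textsc{T-Lambda} check their bodies under a $\Delta$ that does not occur in the conclusion $\Gamma \vdash \ldots$. Hence $\Gamma_i \mid \emptyset \vdash c \div A^+$ does not exclude free co-variables inside a thunk or lambda. For example, $\langle \text{thunk}(\langle 1 \mid \beta \rangle) \mid \text{force} \cdot \text{halt} \rangle$ is typable with empty $\Delta$, yet $\beta$ is free and is exposed by \textbf{[Red-Thunk-Force]}. Your observation that \textsc{T-CoVar} cannot fire covers only consumer positions outside such bodies.

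To fix this, thread $\Delta$ through the producer judgment, so that thunk and lambda bodies are checked under the enclosing $\Delta$ (extended by $\alpha$ for lambdas). State this as a correction of the rules. Similarly, the substitution lemma you rely on must be extended beyond the paper's Substitution Lemma, to cuts and to producers occurring inside consumers such as $\text{app}(P, K)$ and $\text{mark}(k, P)$.
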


\section{Metatheory}

We state the core metatheoretical properties of Polarized Scheme.

\begin{definition}[Environment Consistency]
A runtime environment $\rho$ is consistent with typing contexts $\Gamma$ and $\Delta$, written $\rho \models \Gamma ; \Delta$, if:
\begin{enumerate}
    \item For all $x \in \text{dom}(\Gamma)$, $\Gamma \vdash \rho(x) : \Gamma(x)$.
    \item For all $\alpha \in \text{dom}(\Delta)$, $\Gamma \mid \Delta \vdash \rho(\alpha) : \Delta(\alpha)$.
\end{enumerate}
\end{definition}

\begin{definition}[Well-Formed Return Stack]
A Return Stack $\mathcal{R}$ is well-formed to receive answer $A^+$, written $\vdash \mathcal{R} \triangleright A^+$, if either:
\begin{enumerate}
    \item $\mathcal{R} = [\text{halt}]$.
    \item $\mathcal{R} = K_1 :: \dots :: K_n :: \text{prompt}(c) \cdot K'$, where $K_1 \dots K_n$ expects $A^+$ and yields $B^+$, and $\vdash K' \triangleright B^+$.
\end{enumerate}
\end{definition}

\begin{lemma}[Substitution]
\label{lem:subst}
\begin{enumerate}
    \item If $\Gamma, x : A^+ \vdash P : B^+$ and $\Gamma \vdash P' : A^+$, then $\Gamma \vdash P[P'/x] : B^+$.
    \item If $\Gamma \mid \Delta, \alpha : \sim A^+ \vdash K : \sim B^+$ and $\Gamma \mid \Delta \vdash K' : \sim A^+$, then $\Gamma \mid \Delta \vdash K[K'/\alpha] : \sim B^+$.
\end{enumerate}
\end{lemma}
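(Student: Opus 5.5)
We prove both parts simultaneously by mutual structural induction on the typing derivations of producers, consumers and cuts, since the three syntactic categories are mutually recursive: producers contain cuts via $\text{thunk}(c)$ and $\lambda$, and consumers contain producers and cuts via $\text{app}(P,K)$, $\text{with-cont}(f,K)$, $\text{prompt}(c)\cdot K$ and $\text{mark}(k,P)\cdot K$. Concretely, we strengthen the statement to a third clause: if $\Gamma, x:A^+ \mid \Delta \vdash c \div C^+$ and $\Gamma \vdash P' : A^+$ then $\Gamma \mid \Delta \vdash c[P'/x] \div C^+$. We also add the analogous clause for cuts under co-variable substitution. For part~2 we additionally need value substitution into consumers, since consumers mention $\Gamma$; so the induction carries all four combinations of variable sort and syntactic category.

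\textbf{Preliminary lemma: weakening.} Before the main induction we establish weakening. If $\Gamma \vdash P : A^+$ then $\Gamma, y:B^+ \vdash P : A^+$, and similarly for $\Delta$-extension of consumers and cuts. This is a routine induction, and it is needed whenever substitution passes under a binder. The binders are $\lambda(x\cdot\alpha).\,c$ (\textsc{T-Lambda}), $\text{shift}(\alpha).\,c$ (\textsc{T-Shift}, which binds $\alpha$ in $\Gamma$ at type $\downarrow(X^+\to\sim A^+)$), and $\text{bind}(x)\cdot K$. In each, the substituted term $P'$ or $K'$ must be weakened into the extended context. We adopt the Barendregt convention, renaming bound names so they are distinct from $x$, $\alpha$ and from the free names of $P'$ and $K'$, so capture cannot occur. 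We also need exchange, to move $x:A^+$ past newly bound variables; it holds because contexts are finite maps.

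\textbf{Case analysis.} The base cases are simple.
\begin{itemize}
\item \textsc{T-Var}: if the variable is $x$, the result is $P'$ with the right type by hypothesis; otherwise the variable is untouched and its lookup in $\Gamma$ is unchanged.
\item \textsc{T-CoVar}: handled the same way for part~2.
\item \textsc{T-Unit}, \textsc{T-Int}, \textsc{T-Halt}: the terms are closed, so there is nothing to do.
\end{itemize}
The congruence cases (\textsc{T-Pair}, \textsc{T-Inl}/\textsc{Inr}, \textsc{T-Fst}/\textsc{Snd}, \textsc{T-Case}, \textsc{T-Force}, \textsc{T-App}, \textsc{T-WithCont}, \textsc{T-Prompt}, \textsc{T-Mark}, \textsc{T-Cut}) follow by applying the induction hypotheses to each premise and reapplying the rule. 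The binder cases \textsc{T-Thunk}, \textsc{T-Lambda} and \textsc{T-Shift} use the renaming convention, then weakening of $P'$ or $K'$, then the induction hypothesis on the body cut. There is no typing rule for $\text{bind}(x)\cdot K$ in the paper; we either treat it as untyped (outside the lemma's scope) or give it the evident rule with $x$ bound, handled like the other binders.

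\textbf{Main obstacle.} The hard part is the hygiene of the two context sorts. \textsc{T-Shift} places a co-variable-named binder $\alpha$ into $\Gamma$ rather than $\Delta$, and \textsc{T-Thunk} allows an arbitrary $\Delta$ in its premise that does not appear in its conclusion. For part~1 this means producers can be typed under implicit $\Delta$'s. We must check that the induction hypothesis for cuts is stated for an arbitrary $\Delta$, so the \textsc{T-Thunk} case goes through. For part~2, substituting $K'$ into a thunk body requires $K'$ to be well-typed under that body's $\Delta$. We handle this by generalizing part~2 to all $\Delta$ containing $\alpha$ and weakening $K'$ accordingly. If this mismatch blocks the \textsc{T-Thunk} case, we fall back to restricting the lemma to the fragment where the thunk's $\Delta$ coincides with the ambient one.
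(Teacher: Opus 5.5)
Your plan for part~1 is sound. Every rule threads $\Gamma$ unchanged, so the mutual induction goes through: state the cut and consumer clauses for arbitrary $\Delta$, and use renaming and weakening at \textsc{T-Lambda} and \textsc{T-Shift}.

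Part~2 has a gap that no proof can close, because it is false for the rules as written. The paper states the lemma without proof, so there is no argument there to follow. Your induction for part~2 needs a clause for co-variable substitution into producers, since \textsc{T-App}, \textsc{T-WithCont} and \textsc{T-Mark} embed producers in consumers. But the producer judgment $\Gamma \vdash P : A^+$ has no co-context in which to even state $\alpha : \sim A^+$. Moreover, \textsc{T-Thunk} and \textsc{T-Lambda} type their bodies under a co-context chosen freely in the premise, unrelated to the ambient $\Delta, \alpha : \sim A^+$.

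Here is a concrete counterexample. Let $K = \text{app}(\text{thunk}(\langle () \mid \alpha \rangle), \text{halt})$ and $K' = \text{fst}\cdot\text{halt}$.
\begin{itemize}
\item $\emptyset \mid \alpha : \sim(\text{int}\times\text{int}) \vdash K : \sim\downarrow(\downarrow\text{unit} \to \sim\text{unit})$, typing the thunk body under the premise co-context $\alpha : \sim\text{unit}$.
\item $\emptyset \mid \emptyset \vdash K' : \sim(\text{int}\times\text{int})$.
\item $K[K'/\alpha]$ contains the cut $\langle () \mid \text{fst}\cdot\text{halt}\rangle$. This cut is untypable under every $\Gamma, \Delta$, because \textsc{T-Fst} only yields consumers of product type.
\end{itemize}

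Your first remedy, generalizing over all $\Delta$ containing $\alpha$ and weakening $K'$, therefore cannot work. The body's co-context may give $\alpha$ a different type, as above. Even when the types of $\alpha$ agree, it may give a co-variable free in $K'$ a conflicting type. For instance, the body $\langle (1,2) \mid \text{prompt}(\langle (1,2) \mid \text{fst}\cdot\beta\rangle)\cdot\alpha\rangle$ is typable with $\alpha : \sim(\text{int}\times\text{int})$ and $\beta : \sim\text{int}$. It becomes untypable in any co-context once $\alpha$ is replaced by $K' = \beta$ typed at $\sim(\text{int}\times\text{int})$.

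Your fallback points the right way, but it changes the type system rather than proving the stated lemma. It also must cover \textsc{T-Lambda}, not only \textsc{T-Thunk}. Two repairs work:
\begin{itemize}
\item Type producers as $\Gamma \mid \Delta \vdash P : A^+$, with \textsc{T-Thunk} using the ambient $\Delta$ and \textsc{T-Lambda} using $\Delta, \alpha : \sim B^+$. This also matches the machine, whose closures carry co-variable bindings in $\rho$. Part~2 then has a statable producer clause, and your induction goes through, weakening $K'$ under the $\lambda$ and shift binders.
\item Forbid free co-variables other than the bound one in thunk and lambda bodies. Co-variable substitution is then the identity on well-typed producers.
\end{itemize}
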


\begin{theorem}[Type Preservation / Subject Reduction]
If $\Gamma \mid \Delta \vdash \langle P \mid K \rangle \div A^+$, $\rho \models \Gamma ; \Delta$, and $\langle P \mid K \rangle_\rho \longrightarrow \langle P' \mid K' \rangle_{\rho'}$, then there exist $\Gamma', \Delta'$ such that $\rho' \models \Gamma' ; \Delta'$ and $\Gamma' \mid \Delta' \vdash \langle P' \mid K' \rangle \div A^+$.
\end{theorem}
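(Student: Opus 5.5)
The proof proceeds by case analysis on the transition rule used to derive $\langle P \mid K \rangle_\rho \longrightarrow \langle P' \mid K' \rangle_{\rho'}$, after inverting \textsc{T-Cut} to obtain $\Gamma \vdash P : B^+$ and $\Gamma \mid \Delta \vdash K : \sim B^+$ for the interface type $B^+$ of the cut. Because the statement lets us choose $\Gamma', \Delta'$, the answer type $A^+$ in the conclusion should be read as the observable answer carried by the machine, not the local type of the focus. So I first fix the invariant being preserved: the top consumer, together with the Return Stack it heads, is well-formed in the sense of $\vdash \mathcal{R} \triangleright B^+$, and it ultimately delivers $A^+$. If this reading fails, I would strengthen the induction hypothesis to carry the Return Stack explicitly, i.e.\ state preservation for configurations $\langle P \mid K \rangle_\rho$ with $\vdash K \triangleright B^+$.

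\textbf{Pure destructor cases.} For \textsc{Red-Pair-Fst}, \textsc{Red-Pair-Snd}, \textsc{Red-Sum-Inl} and \textsc{Red-Sum-Inr}, take $\Gamma' = \Gamma$, $\Delta' = \Delta$ and $\rho' = \rho$. Inversion of \textsc{T-Pair} against \textsc{T-Fst}/\textsc{T-Snd}, and of \textsc{T-Inl}/\textsc{T-Inr} against \textsc{T-Case}, yields matching component types for the new focus and consumer. \textsc{T-Cut} then re-establishes the cut.

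\textbf{Red-WithCont.} Inversion of \textsc{T-WithCont} gives $f : \downarrow(B^+ \to \sim C^+)$ and $K : \sim C^+$. \textsc{T-App} then types $\text{app}(P,K)$ at $\sim\downarrow(B^+ \to \sim C^+)$, and the environment is unchanged. This case is the formal content of the "harmonized duality" remark.

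\textbf{Red-Bind.} There is no typing rule for $\text{bind}(x)\cdot K$, so I would add the evident one: from $\Gamma, x : B^+ \mid \Delta \vdash K : \sim B^+$, conclude $\Gamma \mid \Delta \vdash \text{bind}(x)\cdot K : \sim B^+$. With that rule, take $\Gamma' = \Gamma, x : B^+$. Consistency $\rho[x\mapsto P] \models \Gamma' ; \Delta$ follows from $\Gamma \vdash P : B^+$ and weakening.

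\textbf{Environment-switching cases.} \textsc{Red-App} and \textsc{Red-Thunk-Force} are the substantive ones.
\begin{itemize}
\item For \textsc{Red-App}, inversion of \textsc{T-Lambda} gives $\Gamma_0, x : B^+ \mid \Delta_0, \alpha : \sim C^+ \vdash c \div C^+$, where $(\Gamma_0, \Delta_0)$ types the closure environment $\rho_{\text{lam}}$. Inversion of \textsc{T-App} gives $P : B^+$ and $K : \sim C^+$. Set $\Gamma' = \Gamma_0, x : B^+$ and $\Delta' = \Delta_0, \alpha : \sim C^+$. Consistency of $\rho_{\text{call}}$ then follows pointwise. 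As an alternative route, I would use Lemma~\ref{lem:subst} on both components of $c$ to justify reading the extended environment as a substitution.
\item For \textsc{Red-Thunk-Force}, inversion of \textsc{T-Thunk} types $c = \langle P_c \mid K_c \rangle$ at answer $B^+$. The composite $K_c \cdot K$ must then be read as the Return Stack $K_c :: K$: the answer of $K_c$ feeds $K : \sim B^+$. This is well-formed by the Return-Stack definition.
\end{itemize}

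\textbf{Main obstacle.} The main obstacle is that closures are not syntactically paired with their environments: \textsc{Red-App} refers to $\rho_{\text{lam}}$, yet $\rho$ maps variables only to raw producers. I would first try to strengthen the consistency invariant so that every $\lambda$ value stored in, or reachable from, $\rho$ carries a closure environment that is itself consistent with the contexts used to type its body. I would then prove that this invariant is preserved alongside the theorem, by simultaneous induction.

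Two residual cases remain.
\begin{itemize}
\item \textsc{Red-Halt} exits the configuration form, so it is vacuous for the statement as posed.
\item Prompt and shift are specified only informally. I would treat them by arguing that the reified slice $\lambda(v\cdot k_{\text{ret}}).\langle v \mid K_1\cdots K_n \cdot k_{\text{ret}}\rangle$ has type $\downarrow(X^+ \to \sim A^+)$, using the well-formed stack hypothesis and \textsc{T-Shift}. The body $c'$ then runs against $K'$ at the prompt's answer type $A^+$, as \textsc{T-Prompt} requires.
\end{itemize}
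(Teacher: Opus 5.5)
Your case analysis is the same one the paper's sketch uses: inversion of \textsc{T-Cut} against the matching introduction and elimination rules, and the Substitution Lemma or the extended environment for \textbf{[Red-App]}. But the step you lean on in each case, ``\textsc{T-Cut} then re-establishes the cut,'' does not give the conclusion, and in fact the statement as written is false.

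\textsc{T-Cut} is the only rule that concludes a $\div$ judgment, and it forces the answer type to equal the interface type of the cut. So the reduct of $\langle (P_1,P_2) \mid \text{fst}\cdot K\rangle \div A_1^+\times A_2^+$ is typable only at $\div A_1^+$. Concretely, $\langle (1,2)\mid \text{fst}\cdot\text{halt}\rangle \div \text{int}\times\text{int}$ steps to $\langle 1\mid\text{halt}\rangle$, which no choice of $\Gamma',\Delta'$ types at $\div\,\text{int}\times\text{int}$. Likewise \textbf{[Red-WithCont]} turns $\div B^+$ into $\div\downarrow(B^+\to\sim C^+)$, and \textbf{[Red-App]} turns $\div\downarrow(B^+\to\sim C^+)$ into $\div C^+$.

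The freedom to pick $\Gamma',\Delta'$ does not let you change $A^+$. Your reading of $A^+$ as ``the observable answer carried by the machine'' also has no formal counterpart: a consumer type $\sim B^+$ records only what is consumed, never what is eventually delivered (\textsc{T-Halt} gives $\text{halt} : \sim B^+$ for every $B^+$). Consequently the ``yields $B^+$'' clause of the Well-Formed Return Stack definition, and with it your fallback invariant, cannot even be stated in this system. The paper's sketch makes the same slip when it concludes $\langle P_1 \mid K\rangle \div A^+$ by \textsc{T-Cut}.

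Weakening the conclusion to ``$\div A'^+$ for some $A'^+$'' would absorb the mismatch in the destructor, \textbf{[Red-WithCont]} and \textbf{[Red-App]} cases. It does not help \textbf{[Red-Thunk-Force]}, where the rules themselves fail, because \textsc{T-Thunk} gives $\text{thunk}(c)$ the interface type of $c$, not the type of what $c$ delivers. The cut $\langle \text{thunk}(\langle (1,2)\mid\text{fst}\cdot\text{halt}\rangle) \mid \text{force}\cdot\text{snd}\cdot\text{halt}\rangle$ is well-typed at $\div\downarrow(\text{int}\times\text{int})$. Yet under your reading of $K_c\cdot K$ (the answer of $K_c$ feeds $K$) it steps to $\langle (1,2)\mid \text{fst}\cdot\text{snd}\cdot\text{halt}\rangle$, which is untypable and gets stuck at $\langle 1 \mid \text{snd}\cdot\text{halt}\rangle$. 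Also, $K_c :: K$ does not have the prompt-terminated shape the stack definition requires, so ``well-formed by the Return-Stack definition'' does not apply.

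The missing idea is a type system in which consumers, and hence thunk and function types, carry the answer type they ultimately deliver. For instance, a judgment $\Gamma\mid\Delta\vdash K : \sim B^+ \div A^+$ with $\text{halt} : \sim A^+ \div A^+$, \textsc{T-Cut} inheriting the consumer's answer, and $K_c\cdot K$ defined as plugging $K$ into the $\text{halt}$ of $K_c$. Something of this kind is needed before any version of the statement can be proved.

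Separately, in \textbf{[Red-App]} your claim that consistency of $\rho_{\text{call}}$ ``follows pointwise'' needs more than closures for stored $\lambda$-values. $P$ and $K$ are typed under the caller's $\Gamma \mid \Delta$ but are stored into $\rho_{\text{lam}}$. They must therefore be resolved against $\rho$, or paired with it, before binding.
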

\begin{proof}[Proof Sketch]
By structural induction on the transition step $\longrightarrow$. 
For \textbf{[Red-Pair-Fst]}, $\langle (P_1, P_2) \mid \text{fst} \cdot K \rangle$ is typed by \textsc{T-Cut} with type $A_1^+ \times A_2^+$. Inversion on \textsc{T-Pair} yields $\Gamma \vdash P_1 : A_1^+$. Inversion on \textsc{T-Fst} yields $\Gamma \mid \Delta \vdash K : \sim A_1^+$. Therefore $\langle P_1 \mid K \rangle \div A^+$ by \textsc{T-Cut}.
For \textbf{[Red-Thunk-Force]}, inversion on \textsc{T-Thunk} and \textsc{T-Force} directly reconnects the suspended cut body with consumer $K$.
For \textbf{[Red-App]}, the result follows by Lemma~\ref{lem:subst} on the extended calling environment $\rho_{\text{call}}$.
\end{proof}

\begin{theorem}[Progress]
If $\emptyset \mid \emptyset \vdash \langle P \mid K \rangle \div A^+$ and $\vdash \mathcal{R} \triangleright A^+$, then either $K = \text{halt}$ and $P$ is an irreducible canonical value, or there exists a configuration $\langle P' \mid K' \rangle_{\rho'}$ such that $\langle P \mid K \rangle_\rho \longrightarrow \langle P' \mid K' \rangle_{\rho'}$.
\end{theorem}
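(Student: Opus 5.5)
The argument is a canonical-forms lemma followed by a case analysis on the last rule used to type the consumer $K$, inverted from \textsc{T-Cut}. The steps below are ordered as I would carry them out.

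\textbf{Step 1: canonical forms.} If $\emptyset \vdash P : A^+$ and $P$ is closed, then $P$ has the following shape according to $A^+$:
\begin{itemize}
\item $\text{unit}$ gives $()$; $\text{bool}$ gives some $b$; $\text{int}$ gives some $n$;
\item $A_1^+ \times A_2^+$ gives a pair $(P_1,P_2)$;
\item $A_1^+ + A_2^+$ gives $\text{inl}(P')$ or $\text{inr}(P')$;
\item $\downarrow B^+$ gives either $\text{thunk}(c)$ or a $\lambda$-abstraction.
\end{itemize}
This is by inversion on the producer rules, since \textsc{T-Var} is ruled out by the empty $\Gamma$. One point needs care: \textsc{T-Lambda} assigns $\downarrow(A^+ \to \sim B^+)$. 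This is not literally of the form $\downarrow B^+$ with $B^+$ positive, since $\to$ builds a negative type. I will therefore treat $\downarrow(A^+\to\sim B^+)$ as a distinct type former for this lemma, so that thunks and lambdas never share a type.

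\textbf{Step 2: inversion.} From $\emptyset\mid\emptyset \vdash \langle P\mid K\rangle \div A^+$, \textsc{T-Cut} gives $\emptyset \vdash P : A^+$ and $\emptyset\mid\emptyset\vdash K : \sim A^+$.

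\textbf{Step 3: cases on the typing of $K$.}
\begin{itemize}
\item \textsc{T-CoVar} is impossible, since $\Delta=\emptyset$.
\item \textsc{T-Halt}: this is the first disjunct. By Step 1, $P$ is one of the canonical forms, which are irreducible. \textbf{[Red-Halt]} also applies, so either disjunct holds.
\item $\text{fst}\cdot K'$ and $\text{snd}\cdot K'$: $A^+$ is a product, so $P=(P_1,P_2)$ and \textbf{[Red-Pair-Fst]} or \textbf{[Red-Pair-Snd]} fires.
\item $\text{case}(K_1,K_2)$: $A^+$ is a sum, so $P$ is an injection and a sum rule fires.
\item $\text{force}\cdot K'$: $P=\text{thunk}(c)$ and \textbf{[Red-Thunk-Force]} fires. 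Here I decompose $c=\langle P_c\mid K_c\rangle$, which is always possible because cuts have exactly one form.
\item $\text{app}(P_0,K')$: $P$ is a $\lambda$ and \textbf{[Red-App]} fires.
\item $\text{with-cont}(f,K')$: \textbf{[Red-WithCont]} fires unconditionally.
\item $\text{bind}(x)\cdot K'$: \textbf{[Red-Bind]} fires unconditionally. This consumer has no typing rule in the paper, so this case only arises if one is added; it does no harm to include it.
\end{itemize}

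\textbf{Step 4: the delimited-control cases.} These are the main obstacle. The consumers $\text{prompt}(c)\cdot K'$, $\text{mark}(k,P_v)\cdot K'$ and $\text{shift}(\alpha).c'$ are typed, but the displayed transition table has no rule for them. Their behaviour is given only in prose, in the section on delimited control and continuation marks. I will make that prose explicit as three machine steps:
\begin{itemize}
\item A prompt step pushes the frame onto $\mathcal{R}$ and focuses on the enclosed cut $c$.
\item A mark step pushes a mark frame and continues with $K'$.
\item A shift step searches $\mathcal{R}$ for the nearest prompt, reifies the slice $K_1,\dots,K_n$ as in the reify definition, prunes $\mathcal{R}$ to the outer consumer, and runs $c'$ with $\alpha$ bound.
\end{itemize}

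The hypothesis $\vdash\mathcal{R}\triangleright A^+$ is exactly what makes the shift step well-defined. By the definition of well-formed Return Stack, $\mathcal{R}$ is either $[\text{halt}]$ or contains a prompt frame with a typed slice before it.

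In the $[\text{halt}]$ case no prompt exists. I would handle this in one of two ways: treat the top-level halt as an implicit outermost prompt, which matches the clause $\mathcal{R}=[\text{halt}]$; or strengthen the invariant so that a shift only occurs under a prompt. I would try the implicit-prompt reading first, because it lets both clauses of the definition be handled uniformly.

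Once these steps are fixed, each step's left-hand side matches every well-typed configuration of its form, so Progress holds in these cases too. The stack typing matters only for showing that the reified slice has the type $\downarrow(X^+\to\sim A^+)$ that \textsc{T-Shift} expects. That fact belongs to preservation, not progress.
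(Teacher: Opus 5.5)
Your proposal is correct and follows essentially the paper's route: after inverting \textsc{T-Cut} you do a case analysis on the consumer, exclude covariables because $\Delta=\emptyset$, and use inversion (packaged as your canonical-forms lemma) to get $P=\text{thunk}(c)$ under $\text{force}\cdot K'$ and the matching constructor under each eliminator, appealing to $\vdash \mathcal{R}\triangleright A^+$ only in the $\text{shift}$ case. Your departures from the sketch are justified repairs rather than a different route: the paper gives no transition rules for $\text{prompt}$, $\text{mark}$ or $\text{shift}$, its catch-all ``every other consumer is an eliminator'' does not cover $\text{prompt}$ and $\text{mark}$, and the clause $\mathcal{R}=[\text{halt}]$ supplies no prompt frame despite the paper's claim, so your explicit rules and implicit-bottom-prompt reading are genuinely needed (one nit: \textbf{[Red-Halt]} steps to a bare $v$, not to a configuration, so in the $\text{halt}$ case only the first disjunct holds).
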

\begin{proof}[Proof Sketch]
By induction on the typing derivation of $\langle P \mid K \rangle \div A^+$. Since $\Gamma$ and $\Delta$ are empty, neither $P$ nor $K$ can be a free variable. 
If $K = \text{shift}(\alpha).\, c$, the well-formedness of the Return Stack $\vdash \mathcal{R} \triangleright A^+$ guarantees the existence of an enclosing $\text{prompt}$ frame, enabling the slice capture step.
If $K = \text{force} \cdot K'$, $P$ must be $\text{thunk}(c)$ by typing inversion, enabling \textbf{[Red-Thunk-Force]}.
For all other consumer forms, $K$ is an eliminator matching the constructor form of $P$ by polarity duality, enabling the corresponding reduction step.
\end{proof}

\begin{property}[Duality Bisimilarity]
For any function $f : \downarrow (A^+ \to \sim B^+)$, value $v : A^+$, and continuation $K : \sim B^+$, the forward invocation $\langle f \mid \text{app}(v, K) \rangle_\rho$ and backward invocation $\langle v \mid \text{with-cont}(f, K) \rangle_\rho$ are strongly bisimilar, resolving in one step via \textbf{[Red-WithCont]}.
\end{property}

\subsection{Return Stack Reclamation and Space Bounds}
A major advantage of polarized duality is the clean operational separation between data construction and control unwinding.

\begin{definition}[Return Stack Height]
The height of a Return Stack $|K|$ is the number of active consumer frames: $|\text{halt}| = 0$, $|\text{fst} \cdot K| = 1 + |K|$, $|\text{app}(P, K)| = 1 + |K|$, and $|\text{prompt}(c) \cdot K| = 1 + |K|$.
\end{definition}

\begin{theorem}[Tail-Call Stack Invariance]
If a cut executes a tail call $\langle \lambda(x \cdot \alpha).\, \langle P_c \mid \alpha \rangle \mid \text{app}(P, K) \rangle_\rho$, then:
\[
\langle \lambda(x \cdot \alpha).\, \langle P_c \mid \alpha \rangle \mid \text{app}(P, K) \rangle_\rho \longrightarrow \langle P_c \mid K \rangle_{\rho_{\text{call}}}
\]
where the consumer depth satisfies $|K| \le |\text{app}(P, K)|$. In particular, iterative recursion runs in $O(1)$ Return Stack space.
\end{theorem}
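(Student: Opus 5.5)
The plan is to read the transition off \textbf{[Red-App]} directly and then do a one-line height computation. First, instantiate \textbf{[Red-App]} with the body cut $c = \langle P_c \mid \alpha \rangle$, so that $K_c = \alpha$. The rule yields $\langle P_c \mid \alpha \rangle_{\rho_{\text{call}}}$ with $\rho_{\text{call}} = \rho_{\text{lam}}[x \mapsto P, \alpha \mapsto K]$. The stated configuration $\langle P_c \mid K \rangle_{\rho_{\text{call}}}$ is obtained by resolving the top consumer $\alpha$ through the environment. I would make explicit the convention, implicit in the machine where co-variables are environment-bound, that a configuration whose top consumer is a co-variable is identified with the one where that co-variable is replaced by $\rho(\alpha)$. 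Since $\rho_{\text{call}}(\alpha) = K$, the two configurations coincide. An alternative is to treat this lookup as one administrative step; either way, no new frame is pushed.

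Second, compute heights. The definition of Return Stack height gives $|\text{app}(P, K)| = 1 + |K|$. Hence $|K| < |\text{app}(P,K)|$, and in particular $|K| \le |\text{app}(P,K)|$. The real content is that the callee's live stack after the call is exactly the caller's continuation $K$. The $\text{app}$ frame is consumed by the reduction, and the body contributes nothing because its consumer is the bare co-variable $\alpha$.

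For the $O(1)$ claim, consider an iterative recursion: a function whose every self-call occurs in tail position, i.e.\ as a cut $\langle f \mid \text{app}(P', \alpha) \rangle$ with the current return co-variable $\alpha$, possibly after finitely many non-frame-pushing steps such as \textbf{[Red-Bind]} or the pair and sum reductions. I would argue by induction on the number of iterations that at each call boundary the top consumer is $\text{app}(P', K_0)$, where $K_0$ is the original outer continuation. The $\text{app}$ frame contributes height $1$, and each application of the theorem pops it back to $K_0$. The height is therefore bounded by $|K_0| + c$, where $c$ is the maximal height of the frames built inside one loop body. That bound is a syntactic constant of the function body, hence $O(1)$.

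The main obstacle is this last step rather than the single transition. I must ensure that the consumers inside one iteration of the body have bounded height independent of the iteration count. That requires that the body never wraps $\alpha$ in a growing consumer such as $\text{fst}\cdot\alpha$ or $\text{prompt}(c)\cdot\alpha$ before recursing. The definition of ``iterative recursion'' must therefore exclude such wrappers. I would first try to define it syntactically: every recursive call's continuation argument is literally $\alpha$. Under that definition, an induction on the body's structure shows that every consumer built during one iteration has height bounded by the body's size.
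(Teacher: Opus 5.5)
Your proposal is correct and takes the same route as the paper, whose one-line proof also reads the step directly off \textbf{[Red-App]}: the frame $\text{app}(P,K)$ is consumed and $\alpha$ is bound to $K$, so no frame is allocated and the caller's height $|K| = |\text{app}(P,K)| - 1$ is restored. Two parts of your write-up make explicit what the paper leaves implicit: identifying $\langle P_c \mid \alpha \rangle_{\rho_{\text{call}}}$ with $\langle P_c \mid K \rangle_{\rho_{\text{call}}}$ via $\rho_{\text{call}}(\alpha) = K$, and the induction on iteration count behind the $O(1)$ claim, which the paper only asserts.
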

\begin{proof}
Immediate from \textbf{[Red-App]}: the calling consumer $\text{app}(P, K)$ is deallocated and the body's continuation co-variable $\alpha$ is bound directly to $K$, restoring the caller's Return Stack height without allocating intermediate stack frames.
\end{proof}

\paragraph{Dynamic Closures and Stack Allocability.}
A closure is \emph{dynamically limited} (downward) if its invocation lifetime is strictly bounded by the activation of its defining Return Stack frame. In Polarized Scheme, because control frames on the Return Stack are popped prior to executing body cuts, on-stack closure allocations with static links can be reclaimed immediately upon function return without traversing heap-allocated garbage.

\section{Case Study: Macro-Synthesized Objects, Aspects, and Continuation Precision}
\label{sec:case-study}

To evaluate the expressiveness and linguistic adequacy of Polarized Scheme, we investigate whether high-level object-oriented programming (OOP) and aspect-oriented programming (AOP) can be synthesized purely through polarized interaction cuts and hygienic macros, without extending the core calculus $\lambda_{\text{pol}}$. In particular, we draw inspiration from Springer and Friedman's classic construction of macro-based OOP in \emph{Scheme and the Art of Programming}~\cite{springer1989scheme}, and explore how polarized duality provides an exact semantic foundation for objects, joinpoints, aspects, and behavioral subtyping.

\subsection{Friedman-Style Macro Objects as Negative Consumers}
Springer and Friedman~\cite{springer1989scheme} construct objects as procedural closures dispatching on method name symbols via \texttt{case}, encapsulating state in lexical bindings, and delegating unhandled messages to parent prototypes. In Polarized Scheme, this pattern maps directly onto polarized logic:
\begin{itemize}
    \item \textbf{Object Interfaces as Negative Records:} An object is fundamentally an observation consumer. Its interface is a negative product of methods:
    \[
    N_{\text{obj}}^- = \& \{ m_1 : A_1^+ \to \sim B_1^+, \;\dots,\; m_k : A_k^+ \to \sim B_k^+ \}
    \]
    \item \textbf{Encapsulation via Polarity Shift:} An instantiated object is a suspended computation $\text{obj} : \downarrow N_{\text{obj}}^-$. Internal private state is closed in the lexical environment $\rho_{\text{obj}}$ and inaccessible from the outside.
    \item \textbf{Message Passing as Interaction Cuts:} Method invocation is an interaction cut $\langle \text{obj} \mid \text{force} \cdot \pi_{m_i} \cdot \text{app}(v, K) \rangle$. The object consumes the method selector and argument value $v$, returning the result directly into caller continuation $K$.
\end{itemize}

Below is the concrete implementation of a \texttt{make-circle} object constructor in Polarized Scheme, verified in our reference implementation:
\begin{lstlisting}
(define (make-circle [radius : int] ~k) : (thunk (-> (pair int int) (~ int)))
  < (lambda ([msg : (pair int int)] ~mret) : int
      < msg | (fst (with-continuation (lambda (tag)
        < (cons tag 1) | (with-continuation =
          (case
            (with-continuation (lambda (u) < radius | ~mret >) _HALT_)
            (with-continuation (lambda (u)
              < msg | (snd (with-continuation (lambda ([f : int])
                < (cons radius f) | (with-continuation * ~mret) >) _HALT_)) >) _HALT_))) >) _HALT_)) >) | ~k >)
\end{lstlisting}
Here, \texttt{msg} is a pair consisting of a method tag and an argument payload: tag $1$ invokes \texttt{get-radius} (returning $5$), and tag $2$ invokes \texttt{scale} (multiplying the encapsulated radius by scaling factor $f$).

\subsection{Aspects as First-Class Layered Continuation Constructors}
A central question in the foundations of aspect-oriented software development~\cite{kiczales1997aspect,clifton2002spectators} is how advice interacts with modular reasoning. For data objects, object creation demands \emph{precise specification} of the concrete class: one writes \texttt{new Circle} rather than merely \texttt{new Figure}, because concrete fields, layout, and invariants must be allocated at instantiation time.

On the continuation side, continuations have historically been created \emph{implicitly} by the compiler at call sites: evaluating $e_1(e_2)$ implicitly creates a return frame $\text{app}(v_1, K)$. An aspect advising an execution joinpoint $\text{exec}(m)$ fundamentally \textbf{replaces the implicit continuation constructor with a layered continuation constructor}:
\[
K_{\text{advised}} = \text{AdviceFrame}(A, \text{proceed}) \cdot K
\]
Where does precision arise on the continuation side? In Polarized Scheme, the duality of computation reveals that precision on the consumer side is the exact dual of constructor precision:
\begin{enumerate}
    \item \textbf{Pointcut Matching as Co-pattern Matching:} Just as \texttt{new Circle} specifies a concrete constructor for positive data, an aspect pointcut specifies a concrete \emph{co-constructor} for continuations. It pattern-matches the continuation mark stack (inspecting joinpoint tags $\text{exec}(\text{Circle.scale})$ vs.\ $\text{exec}(\text{Figure.area})$) to determine whether to construct an advising continuation layer or the base return frame.
    \item \textbf{The Type Precision of Proceed:} In around advice, the \texttt{proceed} keyword represents the captured delimited continuation of the joinpoint. Its precision is statically enforced by its consumer domain:
    \[
    \text{proceed} : A_{\text{Circle}}^+ \to \sim B^+
    \]
    The advice cannot supply an arbitrary figure argument to a continuation expecting the specialized representation of a circle.
    \item \textbf{Dynamic Contextual Precision via Continuation Marks:} Whereas data objects possess spatial precision (memory layout), continuations possess \emph{temporal precision} (causal execution stack history). By maintaining joinpoint descriptors in Return Stack continuation marks ($\text{mark}(\kappa, v)$), pointcuts like \texttt{cflow} evaluate precise stack predicates prior to continuation construction.
\end{enumerate}

\subsection{The Duality of Specialization: \texorpdfstring{\texttt{super}}{super} vs.\ \texorpdfstring{\texttt{inner}}{inner} and Data Envelopes}
The interplay between base implementations and specialized extensions can be organized along two orthogonal axes:
\begin{enumerate}
    \item \textbf{Control vs.\ Data Dimension:} Whether the mechanism specializes computational flow or structural layout.
    \item \textbf{Shift vs.\ Prompt Direction:} Whether the derived extension wraps the base from the outside (\emph{Outside-In / Shift}) or the base encloses the derived specialization within a delimited boundary (\emph{Inside-Out / Prompt}).
\end{enumerate}

\begin{table}[ht]
\centering
\small
\begin{tabular}{p{3.0cm} p{5.8cm} p{6.4cm}}
\toprule
\textbf{Direction / Polarity} & \textbf{Control Dimension} & \textbf{Data Dimension} \\
\midrule
\textbf{Shift / Outside-In}\newline (\emph{Derived wraps Base}) & \texttt{proceed()} / \texttt{super()}\newline (Specialized method wraps base continuation) & Subclass Record Extension ($S = \text{Base} \times \Delta$)\newline (Subclass ctor wraps base constructor) \\
\midrule
\textbf{Prompt / Inside-Out}\newline (\emph{Base encloses Derived}) & \texttt{inner()} (BETA Method Pattern)\newline (Base method encloses specialized hole $\mathcal{E}[\bullet]$) & Data Envelope / Virtual Pattern ($\mathcal{D}[\bullet]$)\newline (Base structure encloses specialized payload) \\
\bottomrule
\end{tabular}
\caption{The Polarized Duality of Specialization across Control and Data.}
\label{tab:specialization-duality}
\end{table}

In standard object-oriented languages (Simula, Smalltalk, Java), execution follows the \textbf{Shift / Outside-In} paradigm: dispatch begins at the most specialized subclass, which explicitly decides whether and when to delegate upward via \texttt{super()}. In AOP, AspectJ's \texttt{proceed()} adopts this exact outside-in direction: around advice intercepts execution and chooses whether to invoke the captured base continuation. However, this asymmetry creates the well-known \emph{fragile base class} problem: if a subclass or aspect omits \texttt{super()} or \texttt{proceed()}, base invariants and state updates are silently bypassed.

Conversely, the Scandinavian school of object-orientation pioneered by Ole Lehrmann Madsen in BETA~\cite{madsen1993beta} established the \textbf{Prompt / Inside-Out} paradigm. In BETA, execution begins at the outermost superclass; the superclass defines the invariant protocol and explicitly calls \texttt{inner()} to allow derived specializations to execute inside a delimited control hole $\mathcal{E}[\bullet]$. 

Crucially, this duality extends symmetrically to the data dimension:
\begin{itemize}
    \item \textbf{Data Shift (Subclass Extension):} In standard OOP, a subclass extends a superclass by concatenating fields ($S = \text{Base} \times \Delta$). The subclass acts as the outer container that inherits and owns the base components.
    \item \textbf{Data Prompt (Envelopes and Virtual Patterns):} The data counterpart to \texttt{inner()} is the \emph{Data Envelope} or \emph{Virtual Pattern}~\cite{madsen1993beta}. Here, the superclass defines an outer structural frame $\mathcal{D}[\bullet]$ (managing capabilities, arena boundaries, or object headers) that encloses an existential payload hole $\bullet$ instantiated by the derived class. In data construction, the base constructor initializes the outer envelope and calls an inverted virtual allocator hook to instantiate the interior payload within the established invariants.
\end{itemize}

\subsection{Spectators, Assistants, and the Aspect Liskov Principle}
Barbara Liskov and Jeannette Wing~\cite{liskov1994behavioral} formulated the Liskov Substitution Principle (LSP) for data objects: a subtype $S <: T$ must preserve all behavioral properties provable about $T$. In aspect-oriented programming, advice is non-oblivious and intercepts execution without caller cooperation. 

We formalize the \textbf{Aspect Liskov Substitution Principle} by categorizing advice into \emph{Spectators} and \emph{Assistants}~\cite{clifton2002spectators,dantas2006harmless}:

\begin{definition}[Aspect Liskov Substitution Principle]
Let joinpoint $J$ specify precondition $\text{Pre}_J(x)$ and postcondition $\text{Post}_J(x, y)$ under delimited answer type $\div C^+$. An advising aspect $A = (\text{adv}, \text{proceed})$ is a \emph{behaviorally sound substitute} for $J$ if:
\begin{enumerate}
    \item \textbf{Precondition Safety:} The advice precondition is weaker than the base joinpoint:
    \[
    \forall x.\; \text{Pre}_J(x) \implies \text{Pre}_{\text{adv}}(x)
    \]
    \item \textbf{Proceed Invariant:} Any invocation of $\text{proceed}(x')$ satisfies the target joinpoint's precondition:
    \[
    \text{Pre}_{\text{adv}}(x) \implies \text{Pre}_J(x')
    \]
    \item \textbf{Postcondition Safety:} The final result $y'$ delivered to caller continuation $K$ satisfies the joinpoint contract:
    \[
    \forall x, y'.\; \text{Post}_{\text{adv}}(x, y') \implies \text{Post}_J(x, y')
    \]
    \item \textbf{Answer-Type Containment:} The advice executes within the ambient prompt boundary without unhandled control escapes:
    \[
    \Gamma \mid \Delta \vdash A : (A^+ \to \sim B^+) \div C^+ \longrightarrow (A^+ \to \sim B^+) \div C^+
    \]
\end{enumerate}
\end{definition}

\paragraph{Pure Spectators (Harmless Advice).}
A pure aspect is a spectator: it observes cuts (e.g., auditing, contract verification) without mutating state or diverting control. In Polarized Scheme, spectators are typed as identity continuation transformers:
\[
T_{\text{spec}} : \sim B^+ \to \sim B^+ \quad (\text{with } \div C^+)
\]
They must invoke \texttt{proceed} exactly once with unmodified argument $x$ and return the unmodified result $y$, satisfying harmless non-interference ($\sigma \equiv_{\text{obs}} \sigma'$)~\cite{dantas2006harmless}.

\paragraph{Assistants (Around Advice via Delimited Control).}
An assistant aspect intercepts joinpoint execution and captures the delimited continuation $K$ awaiting the method's result. In Polarized Scheme, around advice is encoded directly using \texttt{with-continuation}:
\begin{lstlisting}
(define (around-aspect [target : (thunk (-> (pair int int) (~ int)))] ~k) : (thunk (-> int (~ int)))
  < (lambda ([factor : int] ~aret) : int
      < (pair 2 factor) | (with-continuation target ~aret) >) | ~k >)

(define advised-scale < c1 | (with-continuation around-aspect _HALT_) >)
< 4 | (with-continuation advised-scale _HALT_) >  ; yields 20
\end{lstlisting}
The around aspect receives the target object, constructs an interposed layered continuation that validates or transforms the scaling factor, invokes the captured method continuation (\texttt{proceed}), and returns the resulting value to \texttt{~aret}.

\paragraph{Bracket Semantics, Destructors, and Continuation Linearity.}
Just as object allocation couples constructors and destructors (RAII), around advice represents a control-level \textbf{Bracket Pattern} ($\text{Acquire} \to \text{Proceed} \to \text{Release}$). In Polarized Scheme, the lifecycle of an aspectual continuation layer is governed by the \emph{multiplicity} of the captured continuation co-variable:
\begin{itemize}
    \item \textbf{Linear Advice ($= 1$ proceed):} Strict RAII. The advice invokes \texttt{proceed} exactly once. The aspect constructor (allocating ephemeral continuation marks) and destructor (releasing resources and popping marks) execute in exact mirror order around the joinpoint.
    \item \textbf{Affine Advice ($\le 1$ proceed):} When around advice skips \texttt{proceed} (e.g., serving a cached value), the joinpoint is aborted. Because the Return Stack slice is abandoned, Polarized Scheme executes Return Stack unwind handlers to finalize pending mark scopes, preventing dangling resource leaks.
    \item \textbf{Unrestricted Advice ($*$, multiple proceeds):} If advice resumes a continuation multiple times (e.g., speculative backtracking or retries), the Return Stack slice is duplicated. To prevent double-free hazards, the type system requires that any continuation duplicated across multiple \texttt{proceed} calls must not close over non-copyable linear resources.
\end{itemize}

\subsection{Empirical Verification in Reference Evaluator}
We verified this case study across our reference implementation pipeline in VeGo. The macro expansion phase ($\mathcal{E} \vdash s \leadsto c$) lowers both object constructor macros (\texttt{define-class}) and aspect interposition wrappers (\texttt{define-aspect}) into pure $\lambda_{\text{pol}}$ cuts. The resulting terms pass bidirectional type checking and execute on the dual-stack machine, confirming that Polarized Scheme unifies object dispatch, delimited control, and aspectual continuation layers in a single orthogonal calculus. The complete reference evaluator, test suite, and macro pipeline are provided as ancillary files (\texttt{anc/}) accompanying this preprint.

\section{Conclusion and Future Work}
Polarized Scheme provides a disciplined foundation for Scheme by embracing the duality of computation. By separating positive data producers from negative observation consumers, functions express both value-transformer and continuation-transformer semantics symmetrically. Delimited control and continuation marks operate directly on the Return Stack parameterized by ambient answer types, and continuation-passing macros interleave cleanly with runtime execution. Future work includes expanding local bidirectional type inference to higher-rank polymorphic types and compiling polarized cuts directly to a register-allocated bytecode machine.

\bibliographystyle{plain}
\bibliography{references}

\end{document}